\newtheorem{proposition}{Proposition}
\newtheorem{theorem}{Theorem}
\theoremstyle{definition}
\newtheorem{example}{Example}
\newtheorem{remark}{Remark}
\begin{document}

\title{Time deformations of master equations}

\author{Sergey N. Filippov}

\affiliation{Moscow Institute of Physics and Technology,
Institutskii Per. 9, Dolgoprudny, Moscow Region 141700, Russia}

\affiliation{Institute of Physics and Technology of the Russian
Academy of Sciences, Nakhimovskii Pr. 34, Moscow 117218, Russia}

\author{Dariusz Chru\'sci\'nski}

\affiliation{Institute of Physics, Faculty of Physics, Astronomy
and Informatics \\  Nicolaus Copernicus University, Grudzi{a}dzka
5/7, 87--100 Torun, Poland}

\begin{abstract}
Convolutionless and convolution master equations are the two
mostly used physical descriptions of open quantum systems
dynamics. We subject these equations to time deformations: local
dilations and contractions of time scale. We prove that the
convolutionless equation remains legitimate under any time
deformation (results in a completely positive dynamical map) if
and only if the original dynamics is completely positive
divisible. Similarly, for a specific class of convolution master
equations we show that uniform time dilations preserve positivity
of the deformed map if the original map is positive divisible.
These results allow witnessing different types of non-Markovian
behavior: the absence of complete positivity for a deformed
convolutionless master equation clearly indicates that the
original dynamics is at least weakly non-Markovian; the absence of
positivity for a class of time-dilated convolution master
equations is a witness of essentially non-Markovian original
dynamics.
\end{abstract}

\maketitle

\section{Introduction}

A physical quantum system is never isolated in practice, which
leads to a concept of an open quantum system. The state of such a
system is described by a density operator $\varrho$ on some
Hilbert space $\mathcal{H}$ (positive semidefinite operator with
unit trace). Time evolution of the open system is governed by the
total Hamiltonian $H$ of ``system + environment'' and the initial
state of the environment $\Omega_E$. If the system and environment
are initially factorized, i.e., their state is $\varrho \otimes
\Omega_E$, then the system dynamics is defined by the standard
reduction
\begin{equation}\label{R}
  \varrho(t) = {\rm Tr}_E \left\{ e^{-iHt} \varrho \otimes \Omega_E e^{iHt}
  \right\}.
\end{equation}

\noindent Formula~\eqref{R} defines a dynamical map $\Phi(t)[X] =
{\rm Tr}_E \left\{ e^{-iHt} X \otimes \Omega_E e^{iHt}
  \right\}$, which has an important property
of being completely positive  (CP) and trace-preserving. Complete
positivity means that $\Phi(t) \otimes {\rm Id}_k$ maps any
(possibly entangled) density operator of the system +
$k$-dimensional ancilla into a legitimate density operator.

Physical environments usually have enormously many degrees of
freedom, which makes the dynamics $\varrho(t)$ intractable via
formula~\eqref{R} unless suitable approximations are
made~\cite{alicki-1987,breuer-petruccione-2002,gardiner-2004,carmichael-2009,li-2018}.
Microscopic derivations of system evolution with the help of
projection operator techniques result in either a convolutionless
master equation~\cite{breuer-petruccione-2002}
\begin{equation}
\label{master-equation} \frac{d \varrho(t)}{d t} = L(t)
[\varrho(t)]
\end{equation}

\noindent with a time-local generator
$L(t) : \mathcal{B}(\mathcal{H}) \mapsto \mathcal{B}(\mathcal{H})$ or
a convolution master
equation~\cite{breuer-petruccione-2002,nakajima-1958,zwanzig-1960,haake-1973}
\begin{equation} \label{memory-kernel}
\frac{d\varrho(t)}{dt} = \int_0^t \mathcal{K}(t,t')[\varrho(t')]
dt'
\end{equation}

\noindent with a memory kernel
$\mathcal{K}(t,t'):\mathcal{B}(\mathcal{H})\mapsto\mathcal{B}(\mathcal{H})$.

Only some sufficient conditions on the time-local generator $L(t)$
and memory kernel $\mathcal{K}(t,t')$ are known, which guarantee
complete positivity and trace preservation of the corresponding
dynamical map
$\Phi(t)$~\cite{hall-2008,barnett-2001,budini-2004,shabani-2005,maniscalco-2005,maniscalco-pertuccione-2006,breuer-vacchini-2008,nalezyty-2015,chruscinski-kossakowski-2016,vacchini-2016,chruscinski-kossakowski-2017,zanardi-2017}.

Suppose that master equations~\eqref{master-equation}
and~\eqref{memory-kernel} define a legitimate quantum dynamics,
i.e., a completely positive and trace-preserving dynamical map
$\Phi(t)$. From the quantum information science perspective, the
evolution process $\Phi(t)$ can have peculiar divisibility
properties. If the dynamical map $\Phi(t)$ can be represented in
the form of concatenation $\Phi(t_2) = V(t_2,t_1) \Phi(t_1)$ with
CP intermediate map $V(t_2,t_1)$ for all $t_2
> t_1 \geqslant 0$, then the process $\Phi(t)$ is called CP
divisible. Analogously, if $V(t_2,t_1)$ is positive (P) for all
$t_2 > t_1 \geqslant 0$, then the process $\Phi(t)$ is called P
divisible. P indivisible processes are also referred to as
essentially non-Markovian, whereas CP indivisible but P divisible
processes are sometimes called weakly
non-Markovian~\cite{chruscinski-maniscalco-2014}. CP divisibility
and P divisibility are only two approaches to define Markovian
quantum processes~\cite{wolf-prl-2008,rivas-2010}, many other
approaches include decreasing distinguishability of system
states~\cite{breuer-2009,laine-2010}, monotonicity of quantum
mutual information~\cite{luo-2012}, decreasing capacity of quantum
channels~\cite{bylicka-2014}, independence of evolution with
respect to events preceding the causal break when the system's
state is actively reset~\cite{pollock-2018}, and others. The
reviews of the current status in the discussion of quantum
non-Markovianity are given in the
papers~\cite{rivas-2014,breuer-2016,de-vega-2017,li-2018}.

The goal of this paper is to relate divisibility properties of
$\Phi(t)$ and the behavior of master equations under time
deformations. By time deformation of a master equation we
understand the transformation
\begin{equation}
\label{replacement} \varrho(t) \rightarrow
\widetilde{\varrho}(\tau), \qquad dt \rightarrow d\tau,
\end{equation}

\noindent where
\begin{equation}
\label{time-deformation} \tau(t) = \int_0^t \alpha(t') dt', \qquad
\frac{d\tau}{dt} = \alpha(t),
\end{equation}

\noindent and $\alpha(t)$ is non-negative real function
quantifying the local time stretching.

The naive interpretation of~\eqref{replacement} would be the
replacement of $\Phi(t)$ by $\Phi(\tau(t))$ but this is not the
case if the generator $L(t)$ or memory kernel $\mathcal{K}(t,t')$
is time dependent. In fact, a time deformation~\eqref{replacement}
may result in a non-legitimate master equation. Surprisingly,
non-legitimacy of a deformed master equation is closely related
with the divisibility property of the undeformed dynamics. In this
paper, we reveal this relation.

The paper is organized as follows.

In Sec.~\ref{section-convolutionless}, we show that the time
deformation of the convolutionless master
equation~\eqref{master-equation} results in a legitimate dynamical
maps if and only if the original dynamics is CP divisible. In
Sec.~\ref{section-convolution}, we relate legitimacy of time
deformation of convolution master equation~\eqref{memory-kernel}
with P divisibility of the original dynamics. In
Sec.~\ref{section-conclusions}, brief conclusions are given.

\section{Deformation of convolutionless master equations}
\label{section-convolutionless}

Master equation~\eqref{master-equation} formally defines a
dynamical map $\Phi(t) = T_{\leftarrow} \exp \left( \int_0^t L(t')
dt' \right)$, where $T_{\leftarrow}$ is the Dyson time-ordering
operator. The intermediate map $V(t_2,t_1)$ in concatenation
$\Phi(t_2) = V(t_2,t_1) \Phi(t_1)$ reads $V(t_2,t_1) =
T_{\leftarrow} \exp \left( \int_{t_1}^{t_2} L(t') dt' \right)$.

Time deformation of Eq.~\eqref{master-equation} results in a
modified (\textit{inequivalent}) master equation
\begin{equation}
\label{deformed-rho-tau-t}
\frac{d\widetilde{\varrho}(\tau(t))}{d\tau(t)} = L(t)
[\widetilde{\varrho}(\tau(t))],
\end{equation}

\noindent where the density operator
$\widetilde{\varrho}(\tau(t))$ describes evolution in the deformed
time and the original generator $L(t)$ is applied at time moments
$\tau(t)$, see Fig.~\ref{figure1}(a).

In terms of the original time $t$ Eq.~\eqref{deformed-rho-tau-t}
reads
\begin{equation}
\label{master-equation-modified} \frac{d\varrho(t)}{dt} =
\frac{d\tau}{dt} \frac{d\varrho}{d\tau} = \alpha(t) L(t)
[\varrho(t)].
\end{equation}

We will refer to Eq.~\eqref{master-equation-modified} as a time
deformation of the original time-convolutionless master equation
\eqref{master-equation}.

\begin{figure}[b]
\includegraphics[width=9cm]{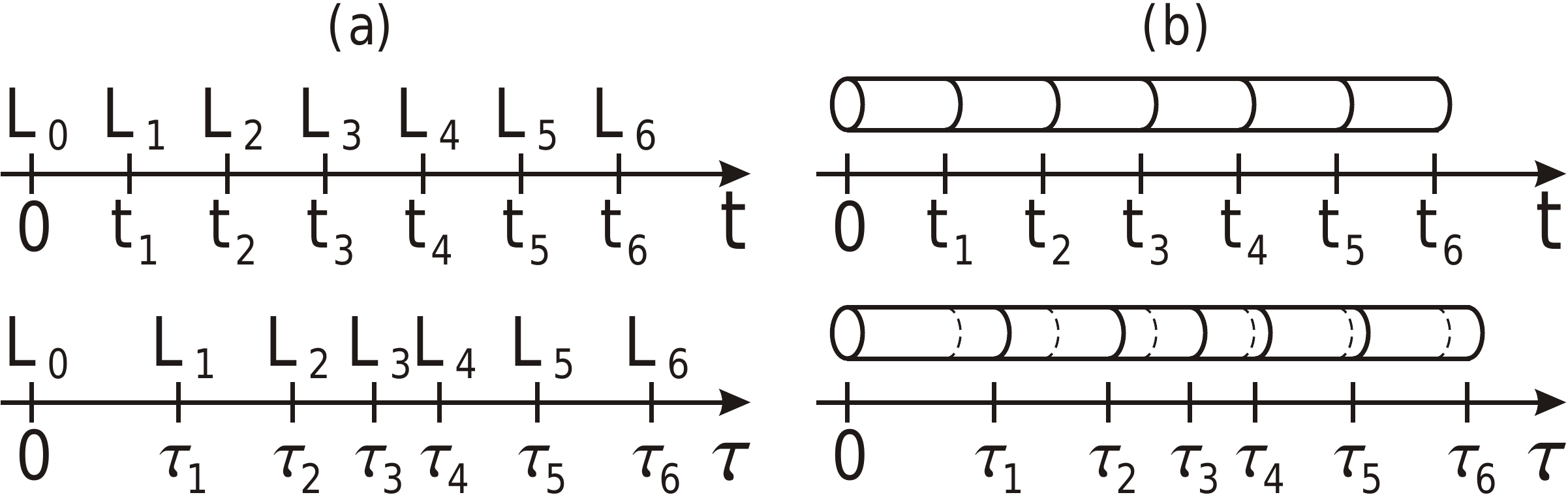}
\caption{\label{figure1} (a) Time deformation of convolutionless
master equation given by time-dependent generator $L(t)$. (b) Time
deformation of CP divisible dynamics. Intermediate channels remain
valid channels under dilations and contractions of time scale.}
\end{figure}

If $L$ is time independent, i.e. $\Phi(t) = e^{Lt}$ is a
semigroup, then \eqref{master-equation-modified} results in a
deformed map $\widetilde{\Phi}(t) = \Phi(\tau(t))$. However, if
$L(t)$ is time dependent, then $\widetilde{\Phi}(t) \neq
\Phi(\tau(t))$. Moreover, $\widetilde{\Phi}(t)$ can become not CP
even if the original map $\Phi(t)$ is legitimate (CP and trace
preserving), which can be illustrated by the following example.

\begin{example}
\label{example-1} Consider a qubit map $\Phi(t):
\mathcal{B}(\mathcal{H}_2) \mapsto \mathcal{B}(\mathcal{H}_2)$
given by the generator \cite{hall-2014}
\begin{equation}
\label{Lt-example}
L(t)[\varrho] = \frac{1}{2} \sum_{i=1}^3 \gamma_i(t) ( \sigma_i
\varrho \sigma_i - \varrho),
\end{equation}

\noindent where $\sigma_1,\sigma_2,\sigma_3$ is the conventional
set of Pauli operators, $\gamma_1(t)=\gamma_2(t)=1$, and
$\gamma_3(t) = -\tanh(t)$. The map $\Phi_t$ is CP and trace
preserving for all $t \geqslant 0$, so it is a legitimate
dynamical map that can be realized physically, e.g., in the
deterministic collision model~\cite{fpmz-2017}.

It was shown in Ref.~\cite{benatti-2017} that the time-deformed
master equation
\begin{equation}
\frac{d\varrho(t)}{dt} = \alpha L(t) [\varrho(t)]
\end{equation}

\noindent (obtained via constant time stretching $\tau = \alpha
t$) results in a CP map $\widetilde{\Phi}(t)$ if and only if
$\alpha \geqslant 1$. Thus, if the original master equation is
subjected to a uniform time dilation ($0<\alpha<1$), then the map
$\widetilde{\Phi}_\alpha (t)$ is not CP and does not correspond to
any physical evolution (of initially factorized system and
environment).

Note that $\widetilde{\Phi}(t) \neq \Phi(\alpha t)$ because the
decoherence rates $\gamma_k(t)$ are time dependent. \hfill
$\square$
\end{example}

Non-legitimacy of the deformed map $\widetilde{\Phi}(t)$ in the
example above can be attributed to the fact that the master
equation~\eqref{Lt-example} describes so-called eternal
non-Markovian evolution, i.e. CP indivisible dynamical map
$\Phi(t)$, where $V(t_2,t_1)$ is not CP for all $t_2 >
t_1$~\cite{hall-2014,wudarski-2015,megier-2016}. On the other
hand, if the original dynamical map were CP divisible, then all
the decoherence rates would be positive. Time stretching would not
affect positiveness of decoherence rates and $\widetilde{\Phi}(t)$
would still be a valid dynamical map. This leads us to the
following main result.

\begin{theorem}
\label{theorem-1} Master equation~\eqref{master-equation} with
non-singular generator $L(t)$ describes CP divisible dynamics if
and only if the deformed map remains CP under any time
deformation~\eqref{master-equation-modified}.
\end{theorem}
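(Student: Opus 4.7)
The plan is to establish the biconditional by invoking the standard characterization of CP divisibility: a non-singular time-local generator $L(t)$ produces a CP divisible dynamical map if and only if at each $t$ it admits a Gorini--Kossakowski--Lindblad--Sudarshan (GKSL) canonical form
\[
L(t)[\varrho] = -i[H(t),\varrho] + \sum_k \gamma_k(t)\Bigl(A_k(t)\varrho A_k^\dagger(t) - \tfrac{1}{2}\{A_k^\dagger(t) A_k(t),\varrho\}\Bigr)
\]
with all decoherence rates $\gamma_k(t)\ge 0$. This reduces the theorem to a statement about the sign pattern of the rates, with the time dilation factor $\alpha(t)\ge 0$ acting multiplicatively.

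For the ``only if'' direction (CP divisibility $\Rightarrow$ deformation-invariance of CP), I would simply observe that if $L(t)$ is GKSL with $\gamma_k(t)\ge 0$, then the deformed generator $\widetilde L(t) := \alpha(t)L(t)$ has exactly the same Hamiltonian/jump structure but with rescaled rates $\alpha(t)\gamma_k(t)\ge 0$. Hence $\widetilde L(t)$ is itself a legitimate time-dependent GKSL generator, the deformed dynamics $\widetilde\Phi(t)=T_{\leftarrow}\exp\bigl(\int_0^t \alpha(s)L(s)\,ds\bigr)$ is CP divisible by the same standard argument applied to $\widetilde L$, and is in particular CP for every $t$.

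For the ``if'' direction I would argue by exhibiting a single diagnostic choice of $\alpha$ that extracts the intermediate map $V(t_2,t_1)$ from the deformed evolution. Fix $0\le t_1 < t_2$ and take $\alpha(t)=\chi_{(t_1,\infty)}(t)$. A direct substitution into the deformed Dyson series gives
\[
\widetilde\Phi(t_2) = T_{\leftarrow}\exp\!\Bigl(\int_0^{t_2}\alpha(s)L(s)\,ds\Bigr) = T_{\leftarrow}\exp\!\Bigl(\int_{t_1}^{t_2} L(s)\,ds\Bigr) = V(t_2,t_1).
\]
If the deformed map is CP for every admissible $\alpha$, then in particular $V(t_2,t_1)$ is CP; letting $t_1,t_2$ range yields CP divisibility of $\Phi$. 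Equivalently, by contrapositive: if $L(t^\ast)$ fails to be GKSL at some $t^\ast$, then picking $\alpha$ concentrated near $t^\ast$ (e.g.\ $\alpha_\epsilon = (c/2\epsilon)\chi_{[t^\ast-\epsilon,t^\ast+\epsilon]}$) produces $\widetilde\Phi(t^\ast+\epsilon) \to \exp(cL(t^\ast))$ as $\epsilon \to 0$, which is not CP for small $c>0$ by the tangent-cone characterization of GKSL generators at the identity.

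The main obstacle I anticipate is handling admissibility and regularity of the deformation $\alpha$. The clean step-function choice requires that one accept non-smooth $\alpha$, which may clash with the tacit smoothness underlying the substitution $d\tau/dt=\alpha(t)$. The fix is to approximate $\chi_{(t_1,\infty)}$ by smooth non-negative deformations $\alpha_n$ such that $\int_0^t \alpha_n \to (t-t_1)_+$ uniformly on compact time intervals, and invoke continuity of the time-ordered exponential in its integrand (this is where the non-singularity hypothesis on $L$ is used to bound the Dyson series uniformly) together with the fact that complete positivity is a closed condition on the space of linear maps. A secondary technical point is the GKSL characterization of CP divisibility itself, which requires enough regularity of $L(t)$ to take a well-defined canonical decomposition pointwise in $t$ --- again supplied by the ``non-singular generator'' assumption stated in the theorem.
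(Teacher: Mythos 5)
Your proposal is correct and follows essentially the same route as the paper: necessity via the time-dependent GKSL canonical form with non-negative rates $\gamma_k(t)$, which are preserved under multiplication by $\alpha(t)\geqslant 0$, and sufficiency via the step-function deformation $\alpha=\chi_{(t_1,\infty)}$ that turns the deformed map into the intermediate map $V(t_2,t_1)$. Your additional remarks on smoothing the step function and on the closedness of the CP condition only make explicit regularity points that the paper leaves tacit.
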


\begin{proof}
Necessity. Suppose the process $\Phi(t)$ is CP divisible and
$L(t)$ is not singular; then the generator $L(t)$ has the
time-dependent Gorini-Kossakowski-Sudarshan-Lindblad
form~\cite{gks-1976,lindblad-1976}
\begin{eqnarray}
&& L(t)[\varrho] = -i [H(t),\varrho] \nonumber\\
&& + \sum_k \gamma_k(t) \left( A_k(t) \varrho A_k^{\dag}(t) -
\frac{1}{2}\{A_k^{\dag}(t) A_k(t) , \varrho \} \right), \qquad
\end{eqnarray}

\noindent where all the rates $\gamma_k(t) \geqslant 0$.
Multiplication of the Hamiltonian $H(t)$ by $\alpha(t)$ preserves
its Hermiticity, and $\alpha(t) \gamma_k(t) \geqslant 0$, so
$\alpha(t)L(t)$ is still a valid generator of the dynamical map
(see, e.g.,~\cite{hall-2008}).

Sufficiency. Let $\alpha(t) = \left\{ \begin{array}{ll}
  0 & \text{if~} 0 \leqslant t < t_1, \\
  1 & \text{if~} t \geqslant t_1, \\
\end{array} \right.$ then the deformed map $\widetilde{\Phi}(t) =
T_{\leftarrow} \exp \left( \int_0^t \alpha(t') L(t') dt' \right) =
\left\{ \begin{array}{ll}
  {\rm Id}, & \text{if~} 0 \leqslant t < t_1, \\
  V(t,t_1) & \text{if~} t \geqslant t_1. \\
\end{array} \right.$ Therefore, if the deformed map
$\widetilde{\Phi}(t)$ remains CP under any deformation, then
$V(t,t_1)$ is CP too for all $t>t_1$, i.e. the original map
$\Phi(t)$ is CP divisible.
\end{proof}

Therefore, CP divisible dynamics preserves the property of being
CP divisible (and consequently CP) under any time deformation; see
Fig.~\ref{figure1}(b). More importantly, if the original dynamical
map is not CP divisible, then this fact can be revealed by a
suitable time deformation under which the deformed map becomes
nonlegitimate.

\begin{remark} Non-singularity of generator $L(t)$ is needed to
guarantee invertibility of $\Phi(t)$. If $\Phi(t)$ is not
invertible, then CP divisibility of $\Phi(t)$ does not require
positivity of rates $\gamma_k(t)$; see
Refs.~\cite{andersson-2007,chruscinski-rivas-stormer-2017}.
However, the generator is not uniquely defined by the dynamical
map $\Phi(t)$ in this case. In particular, if the process is CP
divisible, then there exists a corresponding (possibly singular)
time-local generator with non-negative rates.
Theorem~\ref{theorem-1} holds true for such generators too.
\end{remark}

\section{Deformation of convolution master equations} \label{section-convolution}

In this section, we consider time deformations of the convolution
master equation~\eqref{memory-kernel} and make implications on P
divisibility of the dynamical map $\Phi(t)$.

Continuing the same line of reasoning as before, let us assume
that the same kernel $\mathcal{K}(t,t')$ is applied at deformed
time moments $\tau(t)$ and $\tau(t')$; see Fig.~\ref{figure2}. As
a result, we obtain a time deformation of
Eq.~\eqref{memory-kernel} of the form
\begin{equation}
\frac{d\widetilde{\varrho}(\tau(t))}{d\tau(t)} = \int_0^{\tau(t)}
\mathcal{K}(t,t')[\widetilde{\varrho}(\tau(t'))] d \tau(t'),
\end{equation}

\noindent which in terms of the original time $t$ reads
\begin{equation}
\label{memory-kernel-modified} \frac{d\varrho(t)}{dt} = \int_0^t
\alpha(t) \alpha(t') \mathcal{K} (t,t') [\varrho(t')] dt'.
\end{equation}

\begin{figure}[b]
\includegraphics[width=6cm]{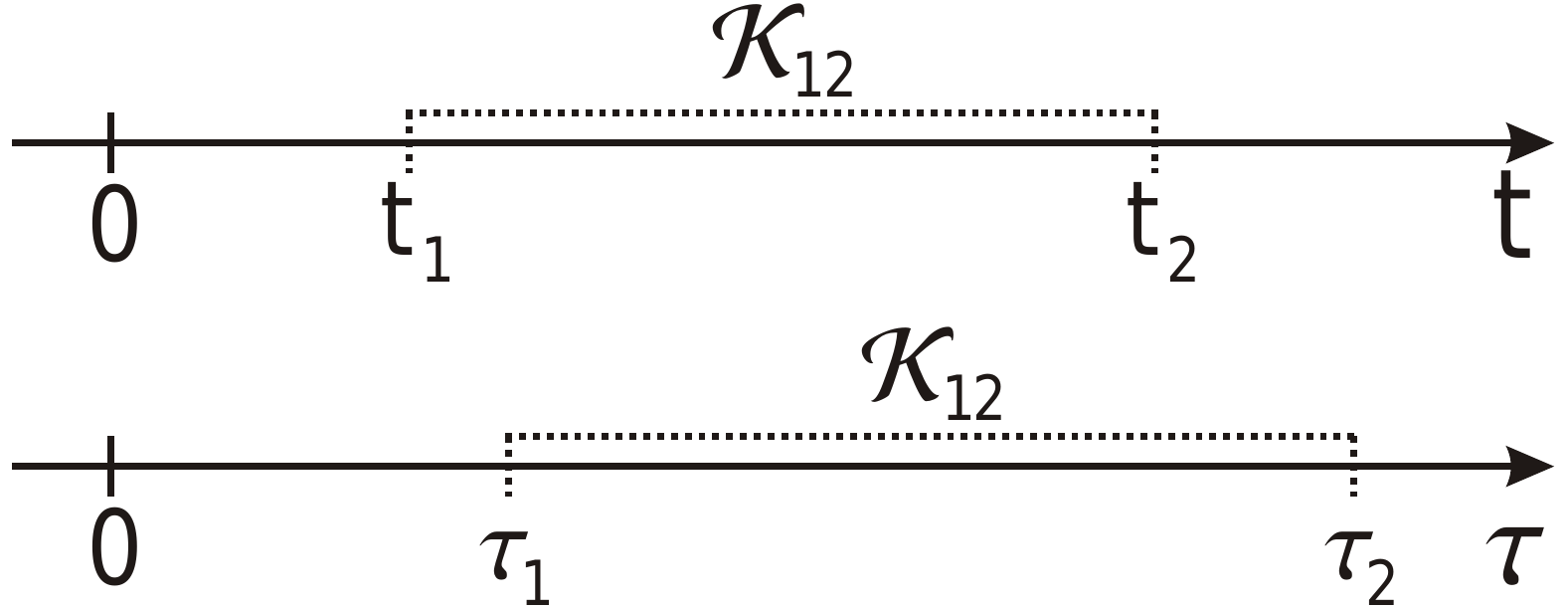}
\caption{\label{figure2} Time deformation of convolution master
equation governed by memory kernel $\mathcal{K}(t,t')$.}
\end{figure}

If $\mathcal{K}(t,t') = \delta(t-t')L(t')$, then
\eqref{memory-kernel-modified} reduces to $\frac{d\varrho(t)}{dt}
= \alpha^2(t) L(t)[\varrho(t)]$, i.e., to the time deformation of
the convolutionless master equation considered before.

We assume that the open system dynamics does not depend on the
particular choice of time moment $t=0$, when the system starts
interacting with environment. Due to this time invariance
$\mathcal{K}(t,t') = K(t-t')$~\cite{nakajima-1958,zwanzig-1960}.
In local time deformations~\eqref{time-deformation}, the modified
kernel $\alpha(t) \alpha(t') K (t-t')$ exhibits time invariance
only if $\alpha(t)$ is time independent. For this reason, we
consider only uniform time deformations $\tau(t) = \alpha t$,
$\alpha = {\rm const}$.

Denoting $(A \ast B) (t) = \int_0^t A(t-t')B(t')dt'$, master
equation~\eqref{memory-kernel} takes the form $\frac{d}{dt}
\Phi(t) = (K \ast \Phi) (t)$. Using the Laplace transform $\Phi_s
= \int_0^{\infty} \Phi(t) e^{-st} dt$, the latter equation reduces
to
\begin{equation}
\Phi_s = (s \, {\rm Id} - K_s)^{-1}.
\end{equation}

The uniformly deformed map $\widetilde{\Phi}(t)$ governed by
Eq.~\eqref{memory-kernel-modified} with $\alpha(t)=\alpha$
satisfies
\begin{equation}
\widetilde{\Phi}_s = (s \, {\rm Id} - \alpha^2 K_{s})^{-1}.
\end{equation}

A straightforward algebra yields the following Laplace
transform of the derivative $\frac{d}{dt}\widetilde{\Phi}(t)$:
\begin{eqnarray}
\left( \frac{d\widetilde{\Phi}}{dt} \right)_s &=& \frac{ \alpha^2
\left( \frac{d\Phi}{dt} \right)_{s} }{ {\rm Id} - (\alpha^2
- 1)\left( \frac{d\Phi}{dt} \right)_{s} } \nonumber\\
&=& \alpha^2 \left( \frac{d\Phi}{dt} \right)_{s} \
\sum_{n=0}^{\infty} (\alpha^2 - 1)^n \left[ \left(
\frac{d\Phi}{dt} \right)_{s} \right] ^n, \qquad
\end{eqnarray}

\noindent where the second line represents a valid expansion if
the norm $\| (\alpha^2-1) \left( \frac{d\Phi}{dt} \right)_{s}
\|_{1 \rightarrow 1} < 1$. In the time domain one finds
\begin{eqnarray}
\frac{d\widetilde{\Phi}}{dt} &=& \alpha^2 \frac{d\Phi}{dt} +
\alpha^2 (\alpha^2 - 1)
\frac{d\Phi}{dt} \ast \frac{d\Phi}{dt}  + \ldots \nonumber\\
&& + \alpha^2 (\alpha^2 - 1)^n \underbrace{ \frac{d\Phi}{dt} \ast
\ldots \ast \frac{d\Phi}{dt} }_{n+1\text{~times}} + \ldots
\label{deformed-derivative}
\end{eqnarray}

Let us restrict ourselves to the commutative case, i.e maps
$\Phi(t)$ satisfying $\Phi(t_1)\Phi(t_2) = \Phi(t_2)\Phi(t_1)$ for
all $t_1,t_2 \geqslant 0$. Commutative maps have time-independent
eigenoperators, so the spectrum of $\frac{d\Phi}{dt}$ is merely
the derivative of the spectrum of $\Phi(t)$. Denote eigenvalues of
$\Phi(t)$ by $\lambda_k(t)$, then for P divisible $\Phi(t)$ one
has $\frac{d|\lambda_k(t)|}{dt} \leqslant 0$
\cite{chruscinski-macchiavello-maniscalco-2017}. If $\Phi(t)$ is
Hermitian, i.e., $\Phi(t)$ coincides with its dual map
$\Phi^{\dag}(t)$ in the Heisenberg picture, then $\lambda_k(t)$
are real. Therefore, for commutative Hermitian P divisible maps
$\Phi(t)$ we have $\frac{d\lambda_k(t)}{dt} \leqslant 0$. On the
other hand, if $\frac{d\lambda_k(t)}{dt} \leqslant 0$, then
(\ref{deformed-derivative}) implies
$\frac{d\widetilde{\lambda}_k(t)}{dt} \leqslant 0$ provided $0 <
\alpha < 1$. This way one arrives at the following result.

\begin{proposition}
\label{proposition-kernel} Suppose the commutative Hermitian
dynamical map $\Phi(t)$ is governed by a memory kernel $K(t)$. If the
uniform time dilation $K(t) \rightarrow \alpha^2 K(t)$ with $0 <
\alpha < 1$ and $(1-\alpha^2) \| \left( \frac{d\Phi}{dt}
\right)_{s} \|_{1 \rightarrow 1} < 1$ results in a map
$\widetilde{\Phi}(t)$ such that $\frac{d\widetilde{\Phi}}{dt}$ has
at least one positive eigenvalue at some time $t$, then the
original map $\Phi(t)$ is not P divisible.
\end{proposition}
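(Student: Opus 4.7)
The plan is to prove the contrapositive: assume $\Phi(t)$ is P divisible, commutative, and Hermitian, and show that every eigenvalue of $d\widetilde{\Phi}/dt$ is non-positive at every $t \geq 0$.

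First I would use commutativity to reduce to a scalar statement per eigensector. Since $\Phi(t_1)\Phi(t_2) = \Phi(t_2)\Phi(t_1)$ for all $t_1,t_2 \geq 0$, the family admits a common set of time-independent eigenoperators $X_k$; consequently $d\Phi/dt$, its Laplace transform $(d\Phi/dt)_s$, and all iterated convolutions $(d\Phi/dt)^{\ast(n+1)}$ appearing in \eqref{deformed-derivative} are simultaneously diagonal in $\{X_k\}$. Acting on $X_k$, \eqref{deformed-derivative} becomes the scalar series
\[
\tilde{\mu}_k(t) = \sum_{n=0}^{\infty} \alpha^2 (\alpha^2 - 1)^n \underbrace{\bigl(\mu_k \ast \cdots \ast \mu_k\bigr)(t)}_{n+1 \text{ factors}},
\]
where $\mu_k(t) := d\lambda_k(t)/dt$ and $\tilde{\mu}_k(t)$ are the eigenvalues of $d\Phi/dt$ and $d\widetilde{\Phi}/dt$ on $X_k$.

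Second, I would invoke the ingredient highlighted in the paragraph preceding the statement: for commutative Hermitian P divisible maps, $\mu_k(t) \leq 0$ for all $k$ and all $t \geq 0$. A direct sign count then closes the argument. On $[0,t]$ the $(n+1)$-fold self-convolution of a non-positive function has sign $(-1)^{n+1}$; the coefficient $(\alpha^2 - 1)^n$ has sign $(-1)^n$ since $0 < \alpha < 1$; their product is non-positive for every $n$; and multiplication by $\alpha^2 > 0$ preserves the sign. Every term of the series is therefore non-positive, so $\tilde{\mu}_k(t) \leq 0$ for all $k$ and $t$, contradicting the assumption that some eigenvalue of $d\widetilde{\Phi}/dt$ is strictly positive at some $t$.

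The main obstacle I expect is justifying the termwise analysis, since \eqref{deformed-derivative} is a Neumann-type geometric expansion of $[\mathrm{Id} - (\alpha^2-1)(d\Phi/dt)_s]^{-1}$ that is only guaranteed to converge under the hypothesis $(1-\alpha^2)\|(d\Phi/dt)_s\|_{1 \to 1} < 1$. I would handle this by working first in the Laplace domain, where on the $X_k$ sector the expansion sums to the closed form
\[
\tilde{\mu}_k(s) = \frac{\alpha^2 \mu_k(s)}{1 - (\alpha^2 - 1)\mu_k(s)};
\]
the norm bound ensures this is well defined, and since $\mu_k(s) = \int_0^\infty \mu_k(t)e^{-st}dt \leq 0$ for real $s$ in the half-plane of convergence, the denominator is at least $1$ and $\tilde{\mu}_k(s) \leq 0$. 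Uniqueness of the inverse Laplace transform, together with the convolution sign pattern above to control the time-domain series termwise, then transfers the non-positivity back to every $t \geq 0$ and completes the proof.
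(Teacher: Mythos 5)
Your proposal is correct and follows essentially the same route as the paper: reduce to scalar eigenvalue functions via the time-independent eigenoperators of a commutative map, use Hermiticity plus P divisibility to get $d\lambda_k/dt\leqslant 0$, and then observe that each term of the series \eqref{deformed-derivative} is non-positive because the sign $(-1)^{n+1}$ of the $(n+1)$-fold convolution of a non-positive function cancels against the sign $(-1)^n$ of $(\alpha^2-1)^n$ for $0<\alpha<1$. The only addition is your Laplace-domain justification of convergence, which is a reasonable way to make explicit what the paper's norm hypothesis is implicitly doing.
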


The class of commutative Hermitian dynamical maps comprises
conventional Pauli qubit maps $\Phi(t)[\varrho] = \frac{1}{2}
\left( {\rm tr}[\varrho] I + \sum_{k=1}^3 \lambda_k(t) {\rm
tr}[\sigma_k \varrho] \sigma_k \right)$ as well as generalized
Pauli
channels~\cite{nathanson-ruskai-2007,chruscinski-siudzinska-2016}.
For Pauli qubit maps one can find a simpler implication of
Proposition~\ref{proposition-kernel}.

\begin{proposition}
\label{proposition-2} Suppose the Pauli map $\Phi(t)$ is governed by
a memory kernel $K(t)$. If the uniform time dilation $K(t)
\rightarrow \alpha^2 K(t)$ with $0 < \alpha < 1$ and $(1-\alpha^2)
(1 - s \int_0^{\infty} \lambda_k(t) e^{- s t} dt) < 1$ results in
a map $\widetilde{\Phi}(t)$, which is not positive, then the
original map $\Phi(t)$ is not P divisible.
\end{proposition}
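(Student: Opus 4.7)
The plan is to derive Proposition~\ref{proposition-2} as the Pauli specialisation of Proposition~\ref{proposition-kernel} and then to translate the spectral condition on $\frac{d\widetilde{\Phi}}{dt}$ into the more operational condition of non-positivity of $\widetilde{\Phi}(t)$.

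First I would note that a Pauli channel has the $t$-independent eigenbasis $\{I,\sigma_1,\sigma_2,\sigma_3\}$ with real eigenvalues $\{1,\lambda_1(t),\lambda_2(t),\lambda_3(t)\}$, so it is commutative and Hermitian and Proposition~\ref{proposition-kernel} is applicable. The memory kernel $K(t)$ is simultaneously diagonal in the same basis, with scalar eigen-kernels $\kappa_k(t)$. Since the Laplace transform of $\frac{d\lambda_k}{dt}$ equals $s\lambda_k(s)-1$ (using $\lambda_k(0)=1$), the induced $1\to 1$ norm of the superoperator $(d\Phi/dt)_s$ reduces to $\max_k|1-s\lambda_k(s)|$, and the hypothesis $(1-\alpha^2)(1-s\lambda_k(s))<1$ of Proposition~\ref{proposition-2} is exactly the geometric-series convergence condition of Proposition~\ref{proposition-kernel} applied branch by branch.

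Next I would invoke the Bloch-ball characterisation of positivity of Pauli maps: $\widetilde{\Phi}(t)$ is positive iff $|\widetilde{\lambda}_k(t)|\leq 1$ for every $k$. Consequently non-positivity at some $t_0$ means $|\widetilde{\lambda}_k(t_0)|>1$ for some index $k$. Because $\widetilde{\lambda}_k(0)=1$ and $\widetilde{\lambda}_k$ is continuous, the straightforward sub-case is $\widetilde{\lambda}_k(t_0)>1$: the mean value theorem furnishes $t_{\ast}\in(0,t_0)$ with $\frac{d\widetilde{\lambda}_k}{dt}(t_{\ast})>0$, i.e.\ a positive eigenvalue of $\frac{d\widetilde{\Phi}}{dt}$, and Proposition~\ref{proposition-kernel} then concludes that the original $\Phi(t)$ is not P divisible.

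The step I expect to be the main obstacle is the complementary sub-case $\widetilde{\lambda}_k(t_0)<-1$, in which the deformed eigenvalue has descended monotonically past $-1$ with no obvious point of positive derivative. I would handle it by working in Laplace space, where $\widetilde{\lambda}_k(s)=[s-\alpha^2\kappa_k(s)]^{-1}$, and by applying the eigenvalue-wise form of~\eqref{deformed-derivative}: under the contrapositive assumption that $\Phi(t)$ \emph{is} P divisible, monotonicity of $|\lambda_k(t)|$ combined with the sign pattern of the factors $(\alpha^2-1)^n$ for $0<\alpha<1$ should confine $|\widetilde{\lambda}_k(t)|$ to $[0,1]$ for all $t$, ruling out this sub-case as well. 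Equivalently, any crossing $\widetilde{\lambda}_k(t_0)<-1$ must be preceded by a time at which $\frac{d\widetilde{\lambda}_k}{dt}>0$, which again closes the reduction to Proposition~\ref{proposition-kernel}.
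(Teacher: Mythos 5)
Your overall route coincides with the paper's: check that a Pauli map is commutative and Hermitian so that Proposition~\ref{proposition-kernel} applies, translate non-positivity of $\widetilde{\Phi}$ into $|\widetilde{\lambda}_k(t_0)|>1$ for some $k$, and split into the sub-cases $\widetilde{\lambda}_k(t_0)>1$ and $\widetilde{\lambda}_k(t_0)<-1$. The first sub-case you handle exactly as the paper does (mean value theorem from $\widetilde{\lambda}_k(0)=1$ gives a positive eigenvalue of $d\widetilde{\Phi}/dt$, then invoke Proposition~\ref{proposition-kernel}).

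The gap is in the second sub-case, which you rightly identify as the crux but do not actually close. Your final claim --- that any crossing $\widetilde{\lambda}_k(t_0)<-1$ ``must be preceded by a time at which $\frac{d\widetilde{\lambda}_k}{dt}>0$'' --- is false as a general statement: a function starting at $1$ can decrease monotonically through $-1$ without its derivative ever being positive, so this sub-case cannot be reduced to Proposition~\ref{proposition-kernel} at all. Moreover, the tool you invoke, the eigenvalue-wise form of the derivative expansion~\eqref{deformed-derivative}, only gives a one-sided bound under the contrapositive hypothesis of P divisibility: each term $\alpha^2(\alpha^2-1)^n$ times the $(n+1)$-fold convolution of $\frac{d\lambda_k}{dt}\leqslant 0$ carries the sign $(-1)^n\cdot(-1)^{n+1}=-1$, so you obtain $\frac{d\widetilde{\lambda}_k}{dt}\leqslant 0$ and hence $\widetilde{\lambda}_k\leqslant 1$, but nothing prevents $\widetilde{\lambda}_k$ from drifting below $-1$. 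What is needed --- and what the paper uses --- is the analogous expansion of the map itself, $\widetilde{\Phi}=\Phi+(\alpha^2-1)\bigl(\frac{d\Phi}{dt}\ast\Phi\bigr)+\dots$, Eq.~\eqref{deformed-solution-to-convolution}: P divisibility of a Pauli map gives $\lambda_k\geqslant 0$ and $\frac{d\lambda_k}{dt}\leqslant 0$ (both follow from the monotone decrease of $|\lambda_k|$ starting at $\lambda_k(0)=1$), so the $n$th term, being $(\alpha^2-1)^n$ times the convolution of $n$ non-positive factors with the non-negative $\lambda_k$, has sign $(-1)^n\cdot(-1)^n=+1$; summing yields $\widetilde{\lambda}_k(t)\geqslant\lambda_k(t)\geqslant 0$, which directly contradicts $\widetilde{\lambda}_k(t_0)<-1$. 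You named the right ingredients (the sign pattern of $(\alpha^2-1)^n$ and monotonicity from P divisibility), but the mechanism you wrote down does not deliver the required lower bound.
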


\begin{proof}
Condition $(1-\alpha^2) (1 - s \int_0^{\infty} \lambda_k(t) e^{- s
t} dt) < 1$ guarantees the validity of
expansion~\eqref{deformed-derivative}. Let $\widetilde{\Phi}(t)$
be non-positive. Since the Pauli map $\widetilde{\Phi}(t)$ is
positive if and only if $-1 \leqslant \widetilde{\lambda}_k(t)
\leqslant 1$, then either $\widetilde{\lambda}_k(t)
> 1$ or $\widetilde{\lambda}_k(t) < - 1$ for some time $t$.
Note that at the initial moment $\lambda_k(0)=1$.

Suppose $\widetilde{\lambda}_k(t)
> 1$; then there exists a time moment $t_0 \in (0,t)$ such that
$\frac{d\widetilde{\lambda}_k(t)}{dt} (t_0) > 0$. By
Proposition~\ref{proposition-kernel}, $\Phi(t)$ is not P
divisible.

Suppose $\widetilde{\lambda}_k(t) < - 1$; let us show that
$\lambda_k(t) \not\geqslant 0$. Using expansion
\begin{eqnarray}
\widetilde{\Phi}(t) &=& \Phi(t) + (\alpha^2 - 1)
(\tfrac{d\Phi}{dt} \ast \Phi) (t) + \ldots \nonumber\\
&& + (\alpha^2 - 1)^n (\underbrace{ \tfrac{d\Phi}{dt} \ast \ldots
\ast \tfrac{d\Phi}{dt} }_{n\text{~times}} \ast \Phi ) (t) + \ldots
,\label{deformed-solution-to-convolution}
\end{eqnarray}

\noindent one finds that if $\lambda_k(t) \geqslant 0$ and
$\frac{d\lambda_k}{dt} \leqslant 0$, then a time deformation with
$0 < \alpha < 1$ guarantees $\widetilde{\lambda}_k(t) \geqslant
0$. As we consider the case $\widetilde{\lambda}_k(t) < - 1$, this
contradiction proves that $\lambda_k(t) \not\geqslant 0$. As a
result, the original Pauli map $\Phi(t)$ is not P divisible.
\end{proof}

The physical meaning of Proposition~\ref{proposition-2} is that
positivity is a topological property of Pauli P divisible process
$\Phi(t)$, which is preserved under uniform time dilations.

\begin{example}
\label{example-2} Consider a pure dephasing qubit map
$\Phi(t)[\varrho] = \frac{1}{2} \left( {\rm tr}[\varrho] I +
\sum_{k=1}^3 \lambda_k(t) {\rm tr}[\sigma_k \varrho] \sigma_k
\right)$ with $\lambda_1(t) = \lambda_2(t) = 1 - 2 \Gamma t e^{-
\Gamma t}$ and $\lambda_3(t) = 1$. This is a valid dynamical map
if $\Gamma > 0$. Such a map is a solution of the convolution
master equation
\begin{eqnarray}
\label{memory-kernel-example} \frac{d\varrho(t)}{dt} & = &
\int_0^t  \left( \Gamma \delta(t-t') - \Gamma^2 \sin \Gamma (t-t')
\right)
\nonumber\\
&& \qquad \times [\sigma_z \varrho(t') \sigma_z - \varrho(t')]
dt'.
\end{eqnarray}

\noindent Condition $(1-\alpha^2) (1 - s \int_0^{\infty}
\lambda_k(t) e^{- s t} dt) < 1$ is fulfilled automatically if $0 <
\alpha < 1$. The uniform time dilation of the memory kernel
$K(t-t') \rightarrow \alpha^2 K(t-t')$ results in the deformed
Pauli map $\widetilde{\Phi}(t)$ with
\begin{equation}
\label{lambda-modified} \widetilde{\lambda}_1(t) =
\widetilde{\lambda}_2(t) = 1 - 2 \alpha^2 e^{- \alpha^2 \Gamma t}
\, \frac{ \sin \left( \sqrt{1-\alpha^4} \, \Gamma t \right)
}{\sqrt{1-\alpha^4}}
\end{equation}

\noindent and $\widetilde{\lambda}_3(t) = 1$. When the
trigonometric function $\sin(\cdot)$ takes negative values,
$\widetilde{\lambda}_1(t) = \widetilde{\lambda}_2(t)
> 1$, see Fig.~\ref{figure3}, so the deformed map
$\widetilde{\Phi}(t)$ is not positive. By
Proposition~\ref{proposition-2}, it clearly indicates that the
original map $\Phi(t)$ is not P divisible.

Note that for the equivalent original \textit{convolutionless}
equation, the uniform time deformation $\tau = \alpha t$ results
in $\widetilde{\lambda}'_i(t) = [\lambda_i(t)]^{\alpha}$,
$i=1,2,3$. In this case, the deformed map $\widetilde{\Phi}'(t)$
remains CP and does not reveal $P$ indivisibility of $\Phi(t)$.
\hfill $\square$
\end{example}

\begin{figure}
\includegraphics[width=6cm]{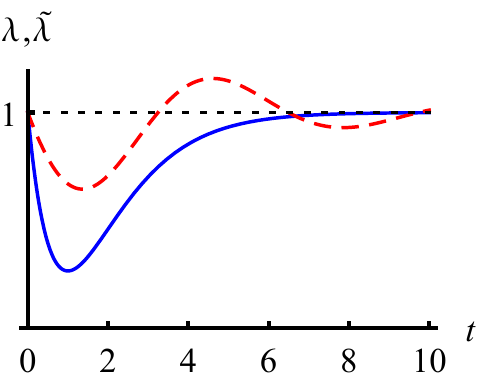}
\caption{\label{figure3} Blue (solid) curve: eigenvalue
$\lambda_1(t)$ of the original dynamical map governed by
convolution master equation~\eqref{memory-kernel-example} with
$\Gamma = 1$. Red (dashed) curve: eigenvalue
$\widetilde{\lambda}_1(t)$ of the time deformed map,
Eq.~\eqref{lambda-modified} with $\Gamma = 1$ and the deformation
coefficient $\alpha = \frac{1}{2}$.}
\end{figure}

\begin{example}
\label{example-3} Let us consider a qubit evolution where the
rescaling of the memory kernel is compatible with P divisibility
of the dynamical map. Following~\cite{nalezyty-2015}, let
$\Phi(t)$ be a Pauli qubit dynamical map governed by the memory
kernel
\begin{equation}\label{kernel-example-3}
K(t)[\varrho] = \frac 12 \sum_{k=1}^3 \varkappa_k(t) \sigma_k {\rm
tr}[\sigma_k \varrho],
\end{equation}

\noindent where the time-dependent eigenvalues $\varkappa_k(t)$
are defined (in the Laplace transform domain) via
\begin{equation}\label{kappa}
(\varkappa_k)_s = \frac{-s f_s}{a_k - f_s} .
\end{equation}

\noindent In the above definition the positive numbers
$\{a_1,a_2,a_3\}$ satisfy triangle inequality $a_i^{-1} + a_j^{-1}
\geq a_k^{-1}$ for all permutations of $\{i,j,k\}$, $f(t)$ is a
real function satisfying $f(t) \geqslant 0$ and $f_0 =
\int_0^\infty f(t)dt \leq 4  \left(a_1^{-1} + a_2^{-1} + a_3^{-1}
\right)^{-1}$. The corresponding eigenvalues of $\Phi(t)$ are
given by $\lambda_k(t) = 1 - a_k^{-1} \int_0^t f(t')dt'$.

The dynamical map $\Phi(t)$ is known to be P divisible if
additionally $f(t)$ satisfies the requirement~\cite{nalezyty-2015}
\begin{equation}\label{II}
 f_0 = \int_0^\infty f(t)dt \leq a_{\rm min} ,
\end{equation}

\noindent where $a_{\rm min} = \min\{a_1,a_2,a_3\}$. Suppose
condition~\eqref{II} is fulfilled, then $f_s \leqslant a_{\rm
min}$ for all $s \geqslant 0$. The deformed eigenvalue
\begin{equation}
( \widetilde{\lambda}_k )_s = \frac{1}{s\left( 1 + \frac{\alpha^2
f_s}{a_k - f_s} \right)} = \frac{1}{s}\left( 1 - \frac{f_s}{a_k}
\right) \sum_{n=0}^{\infty} (1-\alpha^2)^n
\left(\frac{f_s}{a_k}\right)^n
\end{equation}

\noindent in time domain is a convolution of two non-negative
functions: the original eigenvalue $\lambda_k(t) \in (0,1]$ and
the inverse Laplace transform of $\sum_{n=0}^{\infty}
(1-\alpha^2)^n \left(\frac{f_s}{a_k}\right)^n$. Hence,
$\widetilde{\lambda}_k(t) \geqslant 0$. If $0 < \alpha < 1$, then
the latter function is less than or equal to the inverse Laplace
transform of $\sum_{n=0}^{\infty} \left(\frac{f_s}{a_k}\right)^n =
\left(1-\frac{f_s}{a_k}\right)^{-1}$. Therefore,
$\widetilde{\lambda}_k(t)$ is less than or equal to the inverse
Laplace transform of function $(\lambda_k)_s
\left(1-\frac{f_s}{a_k}\right)^{-1} = \frac{1}{s}$, i.e.,
$\widetilde{\lambda}_k(t) \leqslant 1$. Thus, the deformed map is
positive for $0 < \alpha < 1$ because the original map is P
divisible; see Eq.~\eqref{II}.

Interestingly, the map $\widetilde{\Phi}(t)$ being positive and
trace-preserving is in general not completely positive and hence
the kernel deformation $K(t) \rightarrow \alpha^2 K(t)$ does not
lead to the legitimate dynamical map. In fact, consider the
behavior of $\widetilde{\lambda}_k(t)$ when $t \rightarrow
\infty$. By the final value theorem
\begin{equation}
\lim_{t \rightarrow \infty} \widetilde{\lambda}_k(t) = \lim_{s
\rightarrow 0} s(\widetilde{\lambda}_k)_s = \frac{1}{1 + \alpha^2
\frac{f_0}{a_k - f_0}}.
\end{equation}

Suppose $a_1 \leqslant a_2 \leqslant a_3$. The deformed map is CP
if and only if the condition $\widetilde{\lambda}_i +
\widetilde{\lambda}_j \leqslant 1 + \widetilde{\lambda}_k$ is
fulfilled for permutations of indices $\{i,j,k\}$. In the limit $t
\rightarrow \infty$ this condition reduces to inequality
\begin{eqnarray}
&& \!\!\!\!\! f_0 \left( \frac{1}{a_2-f_0} + \frac{1}{a_3-f_0} -
\frac{1}{a_1-f_0} \right) + \frac{2 \alpha^2 f_0^2}{(a_2 -
f_0)(a_3 - f_0)} \nonumber\\
&& \!\!\!\!\! + \frac{\alpha^4 f_0^3}{(a_1 - f_0)(a_2 - f_0)(a_3 -
f_0)} \geqslant 0.
\end{eqnarray}

\noindent The obtained inequality is fulfilled for all $0 < \alpha
< 1$ if and only if $(a_2-f_0)^{-1} + (a_3-f_0)^{-1} \geqslant
(a_1-f_0)^{-1}$, which is surprisingly equivalent to CP
divisibility of the original map $\Phi(t)$; cf.
Ref.~\cite{nalezyty-2015}. Thus the dynamical map governed by the
memory kernel~\eqref{kernel-example-3} is CP divisible only if the
deformed map is CP for all $0 < \alpha < 1$. \hfill $\square$
\end{example}

\begin{figure}[t]
\includegraphics[width=8.5cm]{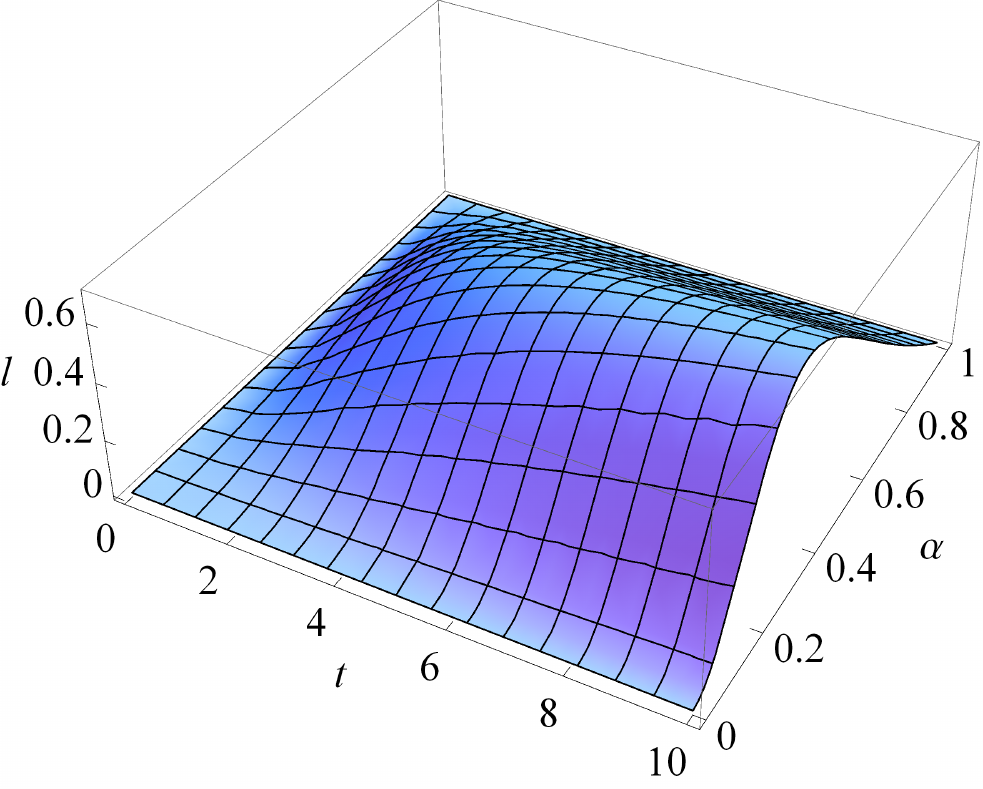}
\caption{\label{figure4} Plot of  $\ell(t) =
\widetilde{\lambda}_1(t) + \widetilde{\lambda}_2(t) -
\widetilde{\lambda}_3(t) - 1$ as a function of dimensionless time
$t$ and rescaling parameter $\alpha$. One has $\ell(t) > 0$ for
all $t>0$ and $0 < \alpha < 1$, so the Fujiwara--Algoet condition
of complete positivity is violated. }
\end{figure}

\begin{example} \label{example-4}
Consider CP indivisible Pauli dynamical map $\Phi(t)$ as in
Example~\ref{example-1} but now in terms of the convolution
equation $\frac{d\Phi}{dt} = K \ast \Phi$. The explicit form of
the kernel $K(t)$ is given in Ref.~\cite{megier-2016}. The uniform
time deformation $K(t) \rightarrow \alpha^2 K(t)$ leads to the
deformed eigenvalues $\widetilde{\lambda}_1(t) =
\widetilde{\lambda}_2(t) = (1+\alpha^2)^{-1} [ 1 + \alpha^2
e^{-(1+\alpha^2)t} ]$ and $\widetilde{\lambda}_3(t) = e^{- 2
\alpha^2 t}$. The deformed map $\widetilde{\Phi}(t)$ is never CP
for $t>0$ and $0 < \alpha < 1$ since the corresponding set of
eigenvalues violates the Fujiwara-Algoet conditions for complete
positivity~\cite{fujiwara-algoet-1999} (cf. Fig.~\ref{figure4}).
\hfill $\square$
\end{example}

Considered examples allow us to make a \textit{conjecture} that a
general Pauli dynamical map $\Phi(t)$, defined by a convolution
master equation, is CP divisible if and only if the deformed map
$\widetilde{\Phi}(t)$ is CP for all $0 < \alpha < 1$.  It is
tempting to pose a similar conjecture for general dynamical maps
governed by memory kernel master equations, namely, the map is CP
divisible iff the corresponding rescaled kernel $\alpha^2 K(t)$ is
physically legitimate for  $0 < \alpha < 1$. This, however,
requires further analysis.

\section{Conclusions} \label{section-conclusions}

We have analyzed different forms of non-Markovianity in terms of
the time deformations of governing master equations. If some
deformation of the time-local equation results in a map, which is
not CP, then the original map is not CP divisible (it is at least
weakly non-Markovian). Analogously, if a deformation of the proper
time-convolution equation results is a map, which is not P, then
the original map is not P divisible (it is essentially
non-Markovian).

As the analysis of convolution master equations is particularly
complicated, we have managed to obtain only a necessary condition
for P divisible Hermitian commutative dynamical maps
(Proposition~\ref{proposition-kernel}). We have illustrated
implications of this condition for Pauli dynamical qubit maps
(Proposition~\ref{proposition-2} and Example~\ref{example-2}). We
have also considered Examples~\ref{example-3} and \ref{example-4}
of Pauli dynamical maps defined via a convolution master equation,
for which CP divisibility is equivalent to CP property of the
deformed map for all uniform time dilations.

In addition to witnessing non-Markovianity, the achieved results
clarify legitimate forms of dissipators and memory kernels, which
naturally emerge due to relativistic and gravitational time
dilation~\cite{pikovski-2017} as well as acceleration of quantum
systems~\cite{richter-2017}.

\begin{acknowledgements}

S.N.F. thanks the Russian Science Foundation for support under
project No. 16-11-00084. D.C. was partially supported by the
National Science Centre project 2015/17/B/ST2/02026.

\end{acknowledgements}

\end{document}